\newcommand*{\circled}[1]{\lower.7ex\hbox{\tikz\draw (0pt, 0pt)%
    circle (.5em) node {\makebox[1em][c]{\small #1}};}}
\theoremstyle{definition}
\newtheorem{theorem}{Theorem}
\newtheorem*{proof}{Proof}
\theoremstyle{remark}
\definecolor{mygreen}{rgb}{0,0.6,0}
\definecolor{mygray}{rgb}{0.5,0.5,0.5}
\definecolor{mymauve}{rgb}{0.58,0,0.82}
\newcommand{\eat}[1]{}
\newcommand{\stitle}[1]{\vspace{0.5ex}\noindent{\bf #1}}
\newcommand{\bi}{\begin{itemize}}
\newcommand{\ei}{\end{itemize}} 
\newcommand{\ie}{\emph{i.e.,}\xspace}
\newcommand{\wrt}{\emph{w.r.t.}\xspace}
\newcommand{\rf}{\textit{refactor}\xspace}
\tiny\color{mygray},
\newcolumntype{I}{!{\vrule width 1.2pt}}
\newlength\savedwidth
\newlength\savewidth
\begin{document}
\begin{CJK*}{UTF8}{gbsn}


\title{Parallel AIG Refactoring via Conflict Breaking}

\author{\IEEEauthorblockN{
Ye~Cai$^{1}$
Zonglin Yang$^{1,2}$,
Liwei~Ni$^{2,4}$,
Junfeng~Liu$^{3}$,
Biwei~Xie$^{4,2}$,
and Xingquan~Li$^{2,\text{\Letter}}$
}
\IEEEauthorblockA{
$^1$Shenzhen University, Shenzhen, China}
\IEEEauthorblockA{
$^2$Peng Cheng Laboratory, Shenzhen, China}
\IEEEauthorblockA{
$^3$SKLSDE, Beihang University, Beijing, China}
\IEEEauthorblockA{
$^4$Institute of Computing Technology, Chinese Academy of Sciences, Beijing, China}
\IEEEauthorblockA{
Emails: $^{\IEEEauthorrefmark{1}}$caiye@szu.edu.cn, 2100271085@email.szu.edu, nlwmode@gmail.com, $^{\text{\Letter}}$lixq01@pcl.ac.cn }
}
\maketitle

\begin{abstract}

Algorithm parallelization to leverage multi-core platforms for improving the efficiency of Electronic Design Automation~(EDA) tools plays a significant role in enhancing the scalability of Integrated Circuit (IC) designs.
Logic optimization is a key process in the EDA design flow to reduce the area and depth of the circuit graph by finding logically equivalent graphs for substitution, which is typically time-consuming.
To address these challenges, in this paper, we first analyze two types of conflicts that need to be handled in the parallelization framework of refactoring And-Inverter Graph~(AIG). 
We then present a fine-grained parallel AIG refactoring method, which strikes a balance between the degree of parallelism and the conflicts encountered during the refactoring operations.
Experiment results show that our parallel refactor is 28x averagely faster than the sequential algorithm on large benchmark tests with 64 physical CPU cores, and has comparable optimization quality.
\end{abstract}

\begin{IEEEkeywords}
Logic Optimization, Parallelization, Refactoring, AIG
\end{IEEEkeywords}

\section{Introduction}
\label{sec:intro}

With the continuous scaling of technology nodes, the number of transistors that can be integrated into a chip has grown exponentially, and thus the design complexity of Very-Large-Scale Integration (VLSI) implementation is expected to exceed soon the scaling capabilities of many existing Computer-Aided Design (CAD) tools~\cite{CAD, EDA}.
In order to accommodate the continually escalating complexity of Integrated Circuit~(IC) designs on shorter design cycles, design space exploration, and optimization, efficient Electronic Design Automation (EDA) algorithms are essentially needed, which have attracted significant attention in IC design within both the academic and industrial communities~\cite{EDA, CAD, berkeley-abc, mishchenko2011scalable, de2010chip, amaru2017logic, HEBO, qin2018solution}. 

Logic optimization in the EDA design flow refers to the process of improving or transforming a given digital design to meet specific objectives, such as Performance, Power, and Area (PPA), without altering the intended functionality of the design. Due to the vast scale of circuit designs, the designed functionality is typically represented by the logic graph And-Inverter Graph (AIG), which is modeled using \textit{AND2} and \textit{inverter} gates. The \rf~\cite{refactor, adaptive_refactor} is a sequential local replacement operator targeting logic optimization from different granularities of graph or optimization objectives. As shown in Fig. \ref{fig:prelim-aig}, \rf greedily operates on subgraphs (logic cone) to reduce area and depth. It simplifies the logic cone based on Irredundant Sum-Of-Product (ISOP) techniques of Boolean algebra and replaces it with a logic equivalent subgraph constructed from the simplified Boolean function. This sequential local replacement operator is executed multiple times iteratively in practical applications to obtain better PPA that meets circuit design requirements~\cite{HEBO, DSE_ML, DSE_RL, DSE_DL}.

Therefore, in this paper, we introduce a new parallel acceleration algorithm for \rf operation based on AIG, which achieves high acceleration over sequential \rf with comparable Quality of Results (QoRs).  The main contributions are summarized as follows: 

\begin{enumerate}
    \item  Based on our analysis of two types of conflicts, we introduce a levelization approach on AIG, aimed at enhancing the parallelism of the operator  \rf. 
    \item  To achieve a high-performance procedure, we further propose a parallel scheme for \rf， which includes four stages (i) work scheduler, (ii) evaluation, (iii) replacement, and (iv) post-processing to implement the parallelism. 
    \item  We conduct extensive experimental studies on large public circuits, which achieve an average speedup of $28\times$  compared with the sequential \rf in ABC~\cite{berkeley-abc} under $64$ threads, without QoR degradation.
\end{enumerate}

The remainder of this paper is structured as follows. 
In Section \ref{sec:prelim}, we introduce the relevant terms used in this paper. 
The key issues for the designed parallel \rf are analyzed in Section \ref{sec:analyses}. 
In Section \ref{sec:method}, we detail our proposed parallel \rf method, based on the analyses. 
The extensive experimental results and analyses are presented in Section \ref{sec:experiment}, followed by the conclusion in Section \ref{sec:conclusion}.

\section{Preliminaries}
\label{sec:prelim}
\subsection{Background}

\begin{figure}[t]
\centering
\subfigure[Original AIG of circuit function.]{
    \label{fig: aig-a}
    \includegraphics[width=0.45\linewidth]{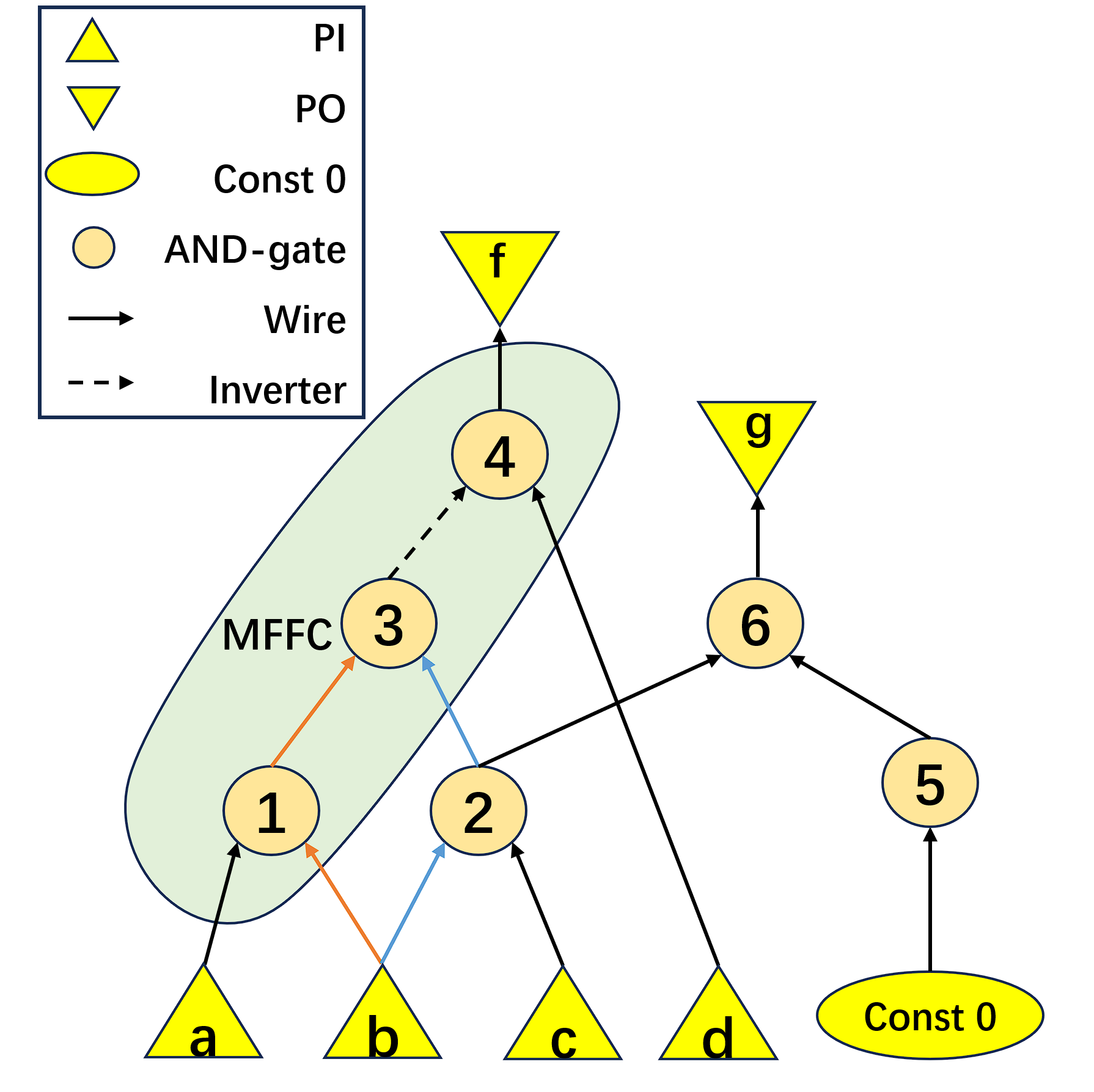}
    }
\subfigure[Optimized AIG by \rf.]{
    \label{fig: aig-b}
    \includegraphics[width=0.45\linewidth]{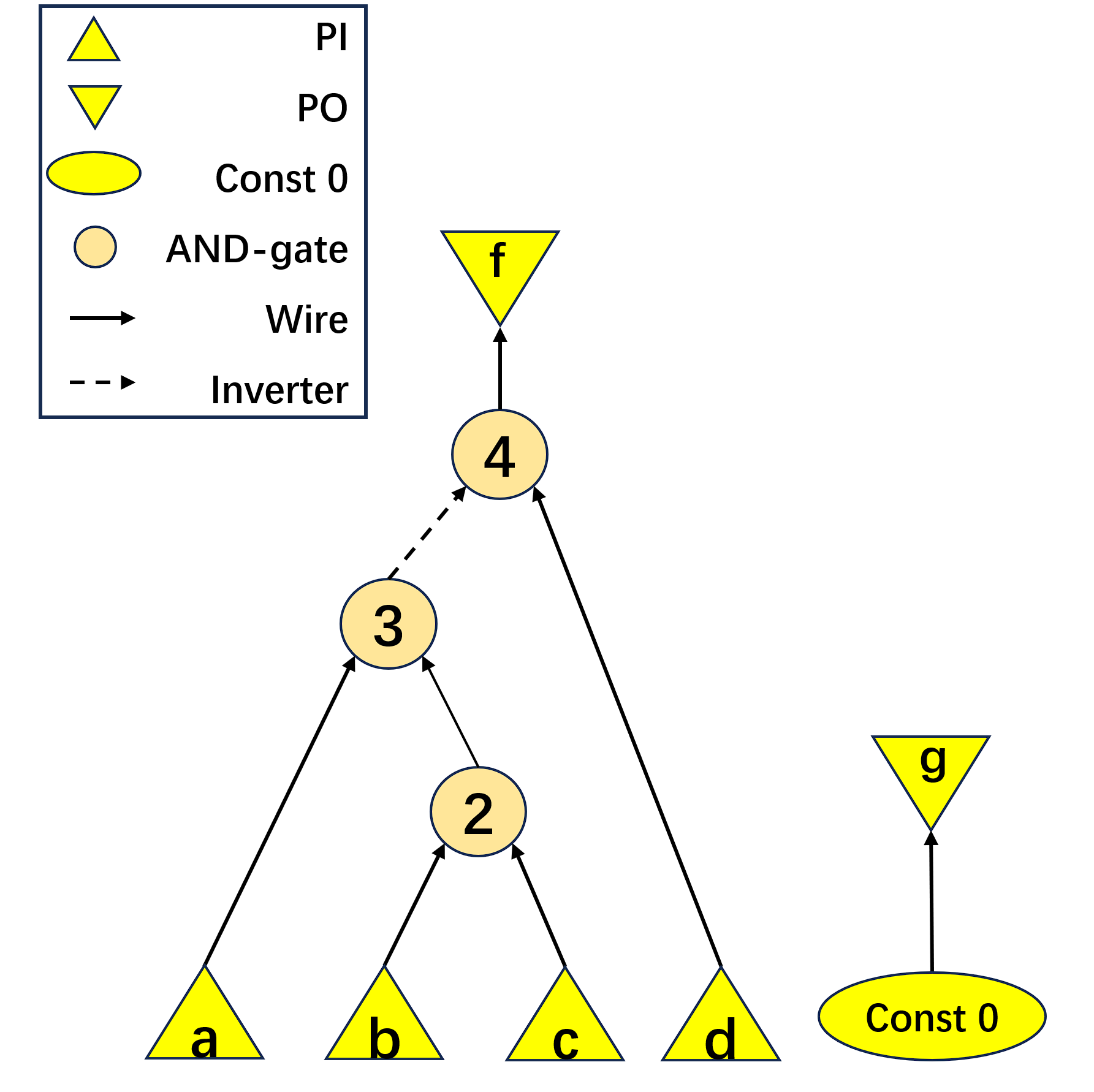}
    }  
\caption{An example of an AIG (left) and its optimised AIG after \rf (right), where the functions   $f=\neg (a \wedge b \wedge c) \wedge d$, and $g=0$. The paths from node $b$ to node $3$ are a pair of reconvergence paths.}
\label{fig:prelim-aig}
\end{figure}

\textbf{And-Inverter Graph (AIG).} It is a typical representation of the logical functionality of a circuit. In AIG, nodes are implemented using \textit{AND2} gates, and edges represent inverted or non-inverted signals, forming a Directed Acyclic Graph (DAG).

\textbf{Cut.} The cut $C$ of a node $v$ is a set of nodes such that every path from Primary Inputs (PIs) to $v$ must pass through at least one node of cut $C$, and \textit{k-cut} refers to the cut contains at most $k$ nodes.
The \textit{logic cone} of a node $v$  is always associated with the cut, which is a set of nodes reachable from the cut leading to node $v$.
The local replacement of \rf is performed on the logic cone as it can be utilized to calculate the Boolean function rooted at $v$. 

\textbf{Maximum Fanout-Free Cone (MFFC).} 
An MFFC of a node $v$ in the AIG is a subset of nodes within its logic cone. In this subset, none of the nodes have fanouts that extend beyond the boundaries of the MFFC.
The size of the MFFC of node $v$ represents the cost required to construct the function of node $v$.

\textbf{Reconvergence.} 
Reconvergence happens when paths from the output of a node intersect before reaching the Primary Outputs (POs)~\cite{refactor,riener2022boolean}. In multi-level logic networks, reconvergence is inevitable due to logic sharing. However, excessive reconvergence often leads to redundancy, indicating the potential for logic optimization~\cite{refactor}.

\textbf{Structural Hashing.} 
Structural hashing is used to determine the existence of a functional node. Its principle is a non-redundant hash table, where nodes with the same function are mapped to the same key corresponding to a linked list.

\section{Analyses of Parallel {Refactor} Design}
\label{sec:analyses}

In this section, we analyze key design issues of parallel \rf solutions on the AIG. The serial \rf algorithm processes the corresponding subtask(\emph{node-refactoring}) of each node in topological order, calculating a more optimized subgraph for replacement. Our goal is to execute these subtasks in parallel and resolve the conflicts that arise. Through analysis, we have identified two types of conflicts:

\begin{figure}[t]
\centering
\subfigure[Origin AIG.]{
    \label{fig:xyz}
    \includegraphics[width=0.45\linewidth]{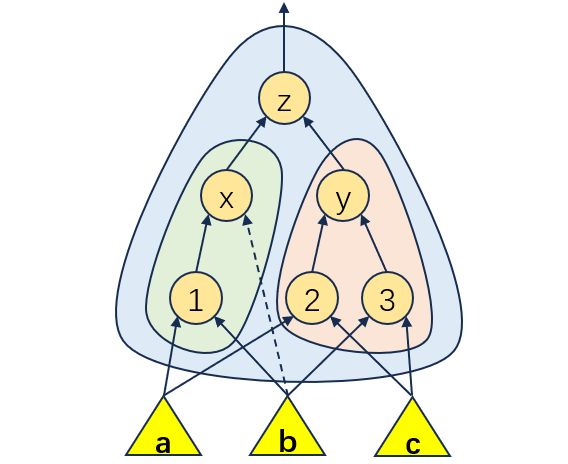}
    }
\hspace{-2ex}
\subfigure[AIG after refactor  $x$ by thread~$t_1$.]{
    \label{fig:x}
    \includegraphics[width=0.45\linewidth]{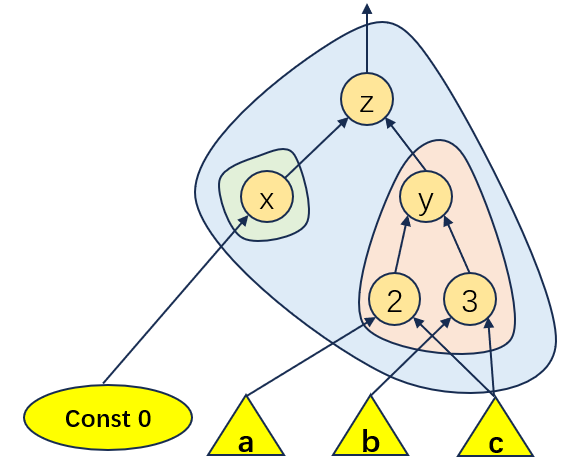}
    }
    
\subfigure[AIG after refactor  $y$ by thread~$t_2$.]{
    \label{fig:y}
    \includegraphics[width=0.45\linewidth]{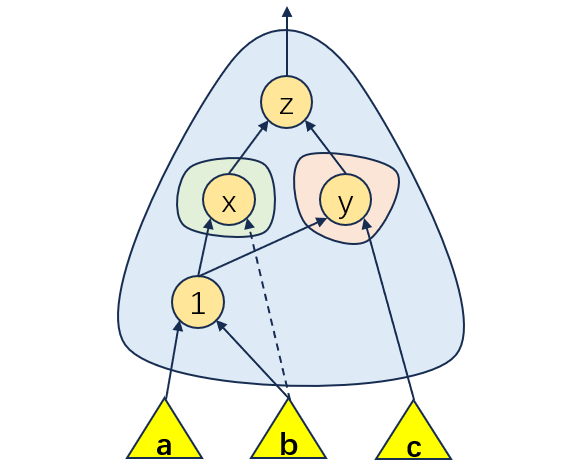}
    }
\hspace{-1ex}    
\subfigure[AIG after refactor $z$ by thread~$t_3$.]{
    \label{fig:z}
    \includegraphics[width=0.45\linewidth]{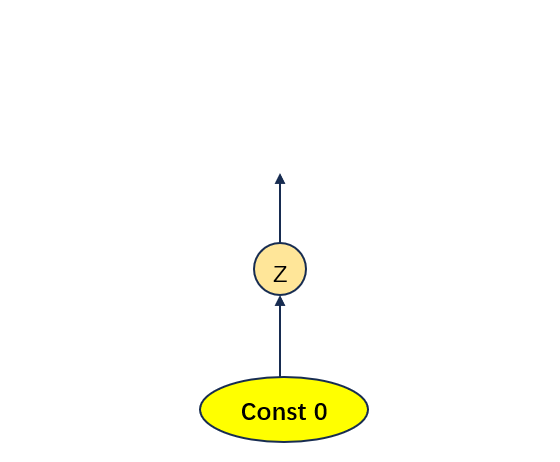}
    }  
\caption{Example of the conflicts caused by data dependency.}
\label{fig:preliminary-conflicts}
\end{figure}

\stitle{Conflict Type I: Data Race.} {\em Different threads inevitably modify the shared data of (1) the original AIG and (2) its structure hashing table, which results in conflicts.
\vspace{.5ex} } 

(1) The data structure of AIG contains an array of pointers to nodes, conflicts could obviously occur when multiple threads modify the node array or the data of the same node.

(2) The structure hashing table is usually organized by an array of the linked list.
Thus, conflicts occur when multiple threads modify the same linked list.

Note that, the other two shared data, \ie cuts and MFFCs are stored as temporary variables, such that they are independent of the operations of different threads to avoid conflicts. 

\stitle{Conflict Type II: Data Dependency.} { \em  Due to the fine-grained node-refactoring operations in MFFCs, the modification on MFFCs could be dependent on other threads, leading to conflicts of (1) duplicate node deletion, and (2) concurrent modification-deletion.}
\vspace{.5ex}

In fact, only the nodes in the MFFC should be modified during the \textit{replacement} stage of the \textit{node-refactoring} process. 
Therefore, based on the  dependencies of MFFCs \wrt different nodes in multiple threads, we conduct the conflicts

(1) During \textit{replacement}, the steps of \rf are to first delete the nodes in the MFFC and then create a new subgraph. 
When multiple threads both delete the nodes on the overlapped MFFCs, the conflict of duplicate node deletion arises.
Taking Fig. \ref{fig:preliminary-conflicts} as an example, if threads $t_1$ and $t_3$ simultaneously perform refactoring on nodes $x$ and $z$ respectively, they might duplicately delete node $1$, as the MFFCs of $x$  and $z$ are overlapped, shown in Figs. \ref{fig:x} \& \ref{fig:z}.

(2) Concurrent modification-deletion conflicts arise when multiple threads attempt to delete and reuse the same node simultaneously. 
For example, considering Fig. \ref{fig:preliminary-conflicts}, if thread $t_1$ and thread $t_2$ perform node-refactoring on nodes $x$ and $y$, respectively, 
at the same time, thread $t_1$ needs to delete node 1 while thread $t_2$ needs to reuse node 1. This results in a conflict, as shown in Fig. \ref{fig:x} and \ref{fig:y}.

Note that, two disconnected subgraphs in AIG do not have same PI, so that the logic functions corresponding to their root nodes are completely unrelated, and the above two conflicts should not occur.

\section{method}
\label{sec:method}

\begin{figure}[t]
    \centering
    \includegraphics[width=1.0\linewidth]{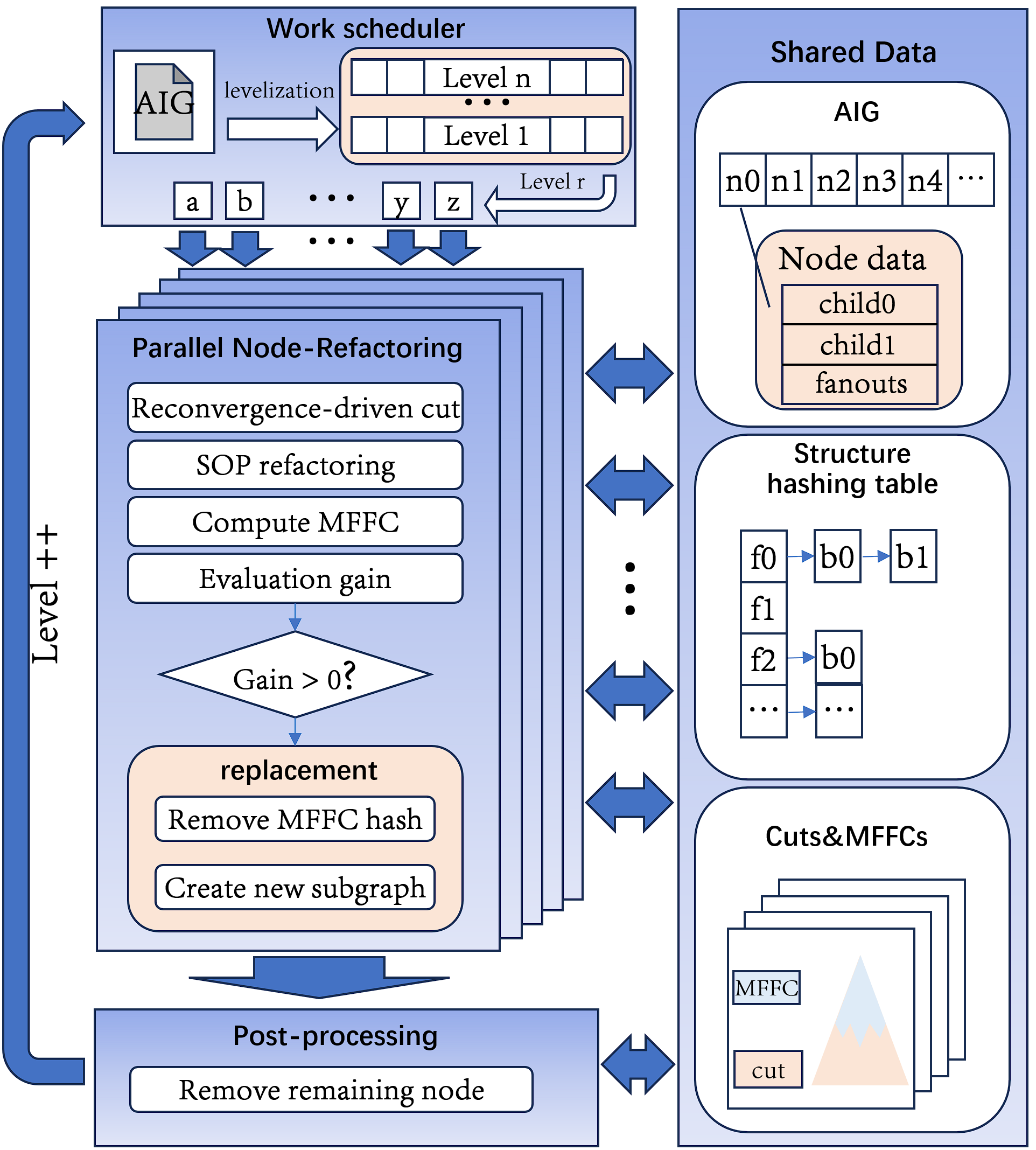}
    \caption{The framework of the proposed parallel AIG refactoring.}
    \label{fig:method-flow}
\end{figure}
Our proposed parallel refactoring flow is shown in Fig. \ref{fig:method-flow}, it consists of iterative four parts: (1) work scheduler; (2) parallel node-refactoring; (3) replacement; (4) post-processing.

\subsection{Work Scheduler}
\label{work scheduler}
In order to solve conflict II.(1) \textit{duplicate node deletion}, 
we parallel process the nodes that do not overlap with MFFCs, which is inspired by the properties of MFFCs ensured by following Theorems.

\begin{theorem}
\label{theorem1}
\em 
If one node is contained in both MFFCs of node $x$ and node $y$, there is a path between $x$ and  $y$. (The path here only allows one-way transmission from PIs to POs). \hfill $\square$
\end{theorem}
\begin{proof}
The fanouts of all MFFC nodes, except the root node, are contained within the MFFC itself. They can only connect to nodes outside the MFFC via the root node. Suppose node $n$ belongs to the MFFCs of both node $x$ and node $y$, and there are paths from node $n$ to the root nodes $x$ and $y$.
(1) If node $x$ is also in the MFFC of node $y$, there is a path from node $x$ to the root node $y$.
(2) If node $x$ is not in the MFFC of node $y$, then there is a path from node $n$ to node $x$. Since any path from node $n$ to nodes outside the MFFC of node $y$ has to pass through the root node $y$, i.e., a path $n \rightarrow y \rightarrow x$ exists.
Combining these, we have proved Theorem \ref{theorem1}.
\hfill $\square$
\end{proof}

\begin{theorem}
\label{theorem2}
\em
Given a set of nodes that have the same level in an AIG, the MFFCs of these nodes do not overlap with each other. \hfill $\square$
\end{theorem}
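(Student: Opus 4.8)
The plan is to obtain Theorem~\ref{theorem2} as an immediate corollary of Theorem~\ref{theorem1}, arguing by contradiction. First I would fix two distinct nodes $x$ and $y$ that lie on the same level of the AIG, and suppose toward a contradiction that their MFFCs share at least one common node. Under this hypothesis, Theorem~\ref{theorem1} applies directly and guarantees that there exists a directed path between $x$ and $y$ flowing in the single allowed direction from PIs toward POs. Since this path is directed, without loss of generality I would assume it runs from $x$ to $y$.

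The crux of the argument is then to convert the existence of such a directed path into a statement about \emph{level}. Recall that the level of a node in an AIG is the length of the longest directed path from any PI to that node, so that each AND2 gate traversed increases the level by at least one. Taking a longest PI-to-$x$ path (of length equal to the level of $x$) and concatenating it with the path from $x$ to $y$ produces a PI-to-$y$ path strictly longer than the level of $x$. Because the level of $y$ is by definition the maximum over all such paths, it follows that the level of $y$ is strictly greater than the level of $x$. This contradicts the assumption that $x$ and $y$ occupy the same level, so no node can belong to both MFFCs, giving pairwise disjointness across the whole same-level set.

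The main obstacle I anticipate is that the preliminaries do not formally define ``level,'' so the step linking a directed path to a strict level increase must be made explicit rather than taken for granted. I would therefore state outright that level denotes the longest PI-to-node distance (equivalently, the topological depth), and record the small monotonicity fact that a nonempty directed path forces this quantity to increase strictly. Once that monotonicity is in hand, the contradiction is automatic and Theorem~\ref{theorem2} follows with no further computation; the only subtlety is ensuring the path supplied by Theorem~\ref{theorem1} is genuinely nonempty, which holds because $x \neq y$.
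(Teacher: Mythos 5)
Your proposal is correct and follows essentially the same route as the paper: invoke Theorem~\ref{theorem1} to extract a directed path between two same-level nodes whose MFFCs overlap, then derive a contradiction from the fact that levels strictly increase along any nonempty directed path. Your version merely makes explicit the definition of level and the monotonicity step that the paper leaves implicit.
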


\begin{proof}
By Theorem \ref{theorem1}, if the MFFCs of two root nodes $x$ and $y$ have overlap, that is, there exists at least one overlapped node $n$, then there is a path between node $x$ and node $y$. 
However, all the nodes on the path have different levels, and hence there is no path between two nodes with the same level, nor is there any MFFC overlap. 
\hfill $\square$
\end{proof}

\stitle{Addressing conflict II.(1) duplicate node deletion.}
Theorems \ref{theorem1} \& \ref{theorem2} tell us the MFFCs of nodes with the same level do not overlap with each other. 
Therefore, to address conflict II.(1)  duplicate node deletion as mentioned before, the \textit{work scheduler} organizes nodes in the AIG by levelization and divides nodes into different groups for parallel processing. The scheduler processes each group of nodes in order from PIs to POs direction, and all threads can only process node-refactoring for nodes from the same group simultaneously. This scheduling method ensures that the MFFCs of nodes in the same group do not overlap and that the global processing order of nodes follows the topological order.

\subsection{Evaluation}
\label{evaluation}
Similar to the sequential \rf, the purpose of the evaluation is to determine whether the current logic cone structure should be replaced with the new subgraph created through SOP \textit{refactoring}, by evaluating their gains.
Here, in parallel refactor design, we additionally address the conflict involved in data race and data dependency: I.(1) the original AIG and II.(2) concurrent modification-deletion.

\stitle{Addressing conflict I.(1) the original AIG.}
The cuts and MFFCs, are stored as global variables in the AIG data structure, which leads to conflict. 
To prevent this, we allocate the memories of these global variables to the local variables for each thread, eliminating the need for communication in the parallel environment.

\stitle{Addressing conflict II.(2) concurrent modification-deletion.}
As explained in Section \ref{sec:analyses}, conflict II.(2) can occur when multiple threads concurrently attempt to delete and reuse the same node, as depicted in Fig. \ref{fig:preliminary-conflicts}. 
We use flags to mark the reused nodes in the MFFCs, preventing their deletion during subsequent MFFC deletions and thus avoiding conflicts.

\subsection{Replacement}
The \textit{replacement} stage mainly contains node deletion and creation of nodes,  which involves data race for I.(1) the original AIG and I.(2) the structure hashing table. 
Besides,  conflict II.(2) concurrent modification-deletion is further clarified.

\label{replacement}
\begin{figure}[t]
    \centering
    \includegraphics[width=1.0\linewidth]{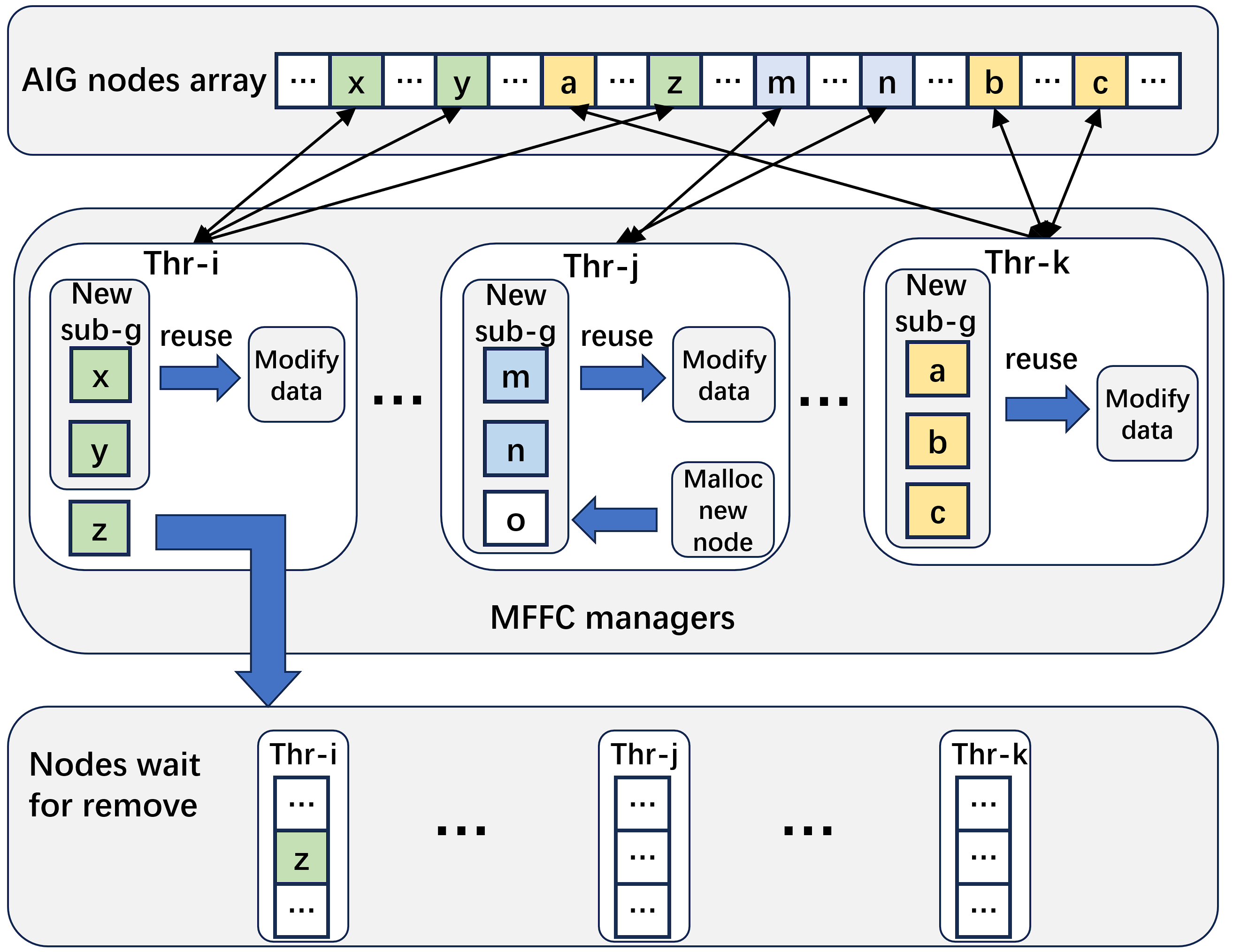}
    \caption{Example of MFFC node recycling and deletion.}
    \label{fig:method-recycle}
\end{figure}

\stitle{Addressing conflict Data Race I.(1) and I.(2).}
The first step of \textit{replacement} is to delete the nodes in MFFC, and the deletion of nodes requires modifications on both the AIG and the structure hashing table. 
(1) Locking on the global AIG and the structure hashing table could significantly impact the thread blocking.
Due to the data structure organization, the blocking caused by locking the structure hashing table(array of linked lists) is very small. 
Because of the design of the hash function, nodes are rarely mapped to the same linked list in practice.
(2) However, the blocking caused by locking the AIG nodes array is unacceptable.
Therefore, each thread collects the node IDs to be deleted and performs the deletion in the AIG in the after \textit{post-proccessing} phase, which processes sequentially.

To further reduce the modification of AIG, we propose the MFFC node recycling strategy. That is, we directly change the data of the MFFC nodes to be deleted to the data of the nodes in the new subgraph. The reasons for doing this are as follows. (1) In the traditional process, the nodes in MFFC are deleted and their memory space is recycled. When constructing the new subgraph, new memory space is allocated for the nodes and their information is filled in. However, the operation of memory space is quite time-consuming. (2) The cost of the new subgraph is always lower than that of the old subgraph, so the number of nodes in MFFC is always enough. (3) This reduces the operation of AIG and avoids some conflicts.

As depicted in Fig. \ref{fig:method-recycle}, the MFFC node recycling strategy can lead to three possible scenarios: (1) If thread $i$ has surplus nodes available for recycling, nodes $x$ and $y$ are reused, while node $z$ is added to the deletion queue. (2) If thread $j$ lacks sufficient nodes for recycling, it must lock the nodes array and allocate new memory space. However, this situation did not occur during our experimental evaluation. (3) If thread $k$ has just enough nodes for recycling, it can directly modify their information. All the nodes in the deletion queue will be deleted in the sequential post-processing phase.

\section{experiment}
\label{sec:experiment}
\subsection{Setup}

\stitle{Environment.}
We implemented the proposed method using GCC 11.1.0 x86 version on a 64-bit Ubuntu 18.04.6 system. We collected our experimental results on a system with 128G shared RAM and 2 Intel® Xeon® Gold 6252 CPU @ 2.10GHz processors, each with 48 physical cores.

\stitle{Benchmarks.}
Since the \textit{parallel refactor} is specifically designed for large AIGs, we expanded the largest 10 circuits from the EPFL~\cite{benchmark_epfl} benchmark suite to the tens of millions of gates level using the \texttt{double} command in ABC~\cite{berkeley-abc}. These expanded circuits were used as input for our experimental analysis. The details of the expanded dataset are shown in Table \ref{table: benchmark details}.

\begin{table}[tb!]
	\centering
	\caption{Benchmark details.}
	\label{table: benchmark details}
    \renewcommand\tabcolsep{3.8pt} 
		\begin{tabular}{|c|c|rrrr|}
			\hline
			\multirow{1}{*}{}        & \multirow{1}{*}{Benchmark name} & \multicolumn{1}{r|}{PIs}                   & \multicolumn{1}{r|}{POs}      & \multicolumn{1}{r|}{Area}       & {Depth} \\ \hline
			\multirow{7}{*}{arith}   & log2\_10xd                      & \multicolumn{1}{r|}{\raggedleft{32,768}  } & \multicolumn{1}{r|}{32,768}   & \multicolumn{1}{r|}{32,829,440} & 444     \\
			                         & square\_10xd                    & \multicolumn{1}{r|}{\raggedleft{65,536}  } & \multicolumn{1}{r|}{131,072}  & \multicolumn{1}{r|}{18,927,616} & 250     \\
			                         & sin\_10xd                       & \multicolumn{1}{r|}{\raggedleft{24,576}  } & \multicolumn{1}{r|}{25,600}   & \multicolumn{1}{r|}{5,545,984}  & 225     \\
			                         & div\_10xd                       & \multicolumn{1}{r|}{\raggedleft{131,072} } & \multicolumn{1}{r|}{131,072}  & \multicolumn{1}{r|}{58,620,928} & 4,372   \\
			                         & hyp\_8xd                        & \multicolumn{1}{r|}{\raggedleft{65,536}  } & \multicolumn{1}{r|}{32,768}   & \multicolumn{1}{r|}{54,869,760} & 24,801  \\
			                         & sqrt\_10xd                      & \multicolumn{1}{r|}{\raggedleft{131,072} } & \multicolumn{1}{r|}{65,536}   & \multicolumn{1}{r|}{25,208,832} & 5,058   \\
			                         & multiplier\_10xd                & \multicolumn{1}{r|}{\raggedleft{131,072} } & \multicolumn{1}{r|}{131,072}  & \multicolumn{1}{r|}{27,711,488} & 274     \\ \hline
			\multirow{3}{*}{control} & arbiter\_10xd                   & \multicolumn{1}{r|}{\raggedleft{262,144} } & \multicolumn{1}{r|}{132,096}  & \multicolumn{1}{r|}{12,123,136} & 87      \\
			                         & voter\_10xd                     & \multicolumn{1}{r|}{\raggedleft{1025,024}} & \multicolumn{1}{r|}{1,024}    & \multicolumn{1}{r|}{14,088,192} & 70      \\
			                         & mem\_ctrl\_10xd                 & \multicolumn{1}{r|}{\raggedleft{1232,896}} & \multicolumn{1}{r|}{1260,544} & \multicolumn{1}{r|}{47,960,064} & 114     \\ \hline
		\end{tabular}
\end{table}

\begin{figure*}[t]
    \centering
    \includegraphics[width=0.7\textwidth]{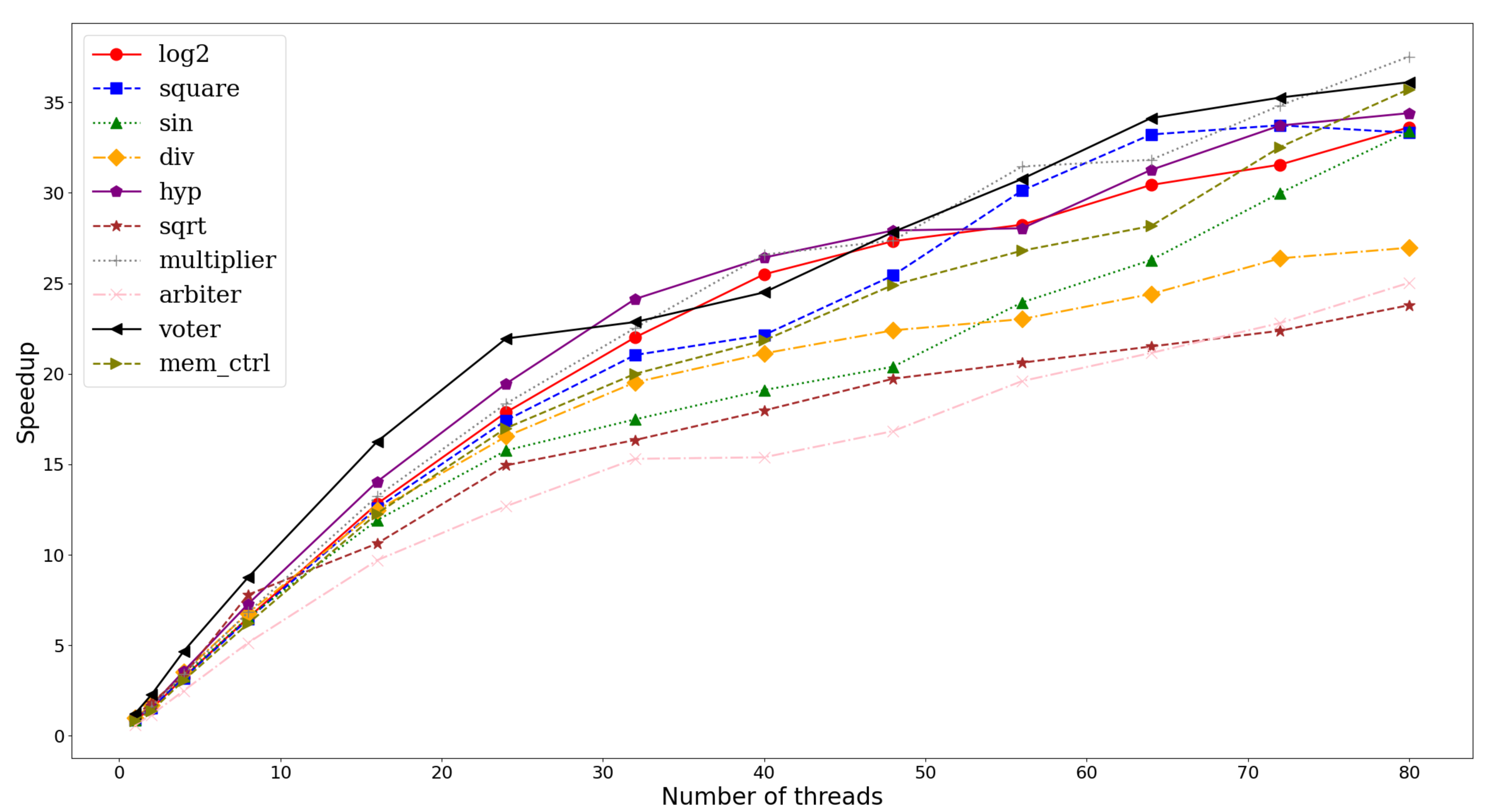}
    \caption{Scalability of \texttt{parallel refactor} compared with ABC command \texttt{refactor -l 0}.}  
    \label{fig: result}
\end{figure*}

\begin{table}[t]
	\centering
	\caption{Runtimes, in seconds, threads in 1, 40, 80 compared with ABC command \texttt{refactor -l 0}, the time unit is second.}
	\label{table: runtimes}
		\begin{tabular}{|c|rrrrr|}
			\hline
			\multirow{1}{*}{Benchmark name} & \multicolumn{1}{r|}{ABC}      & \multicolumn{1}{r|}{1 thr}    & \multicolumn{1}{r|}{40 thr} & \multicolumn{1}{r|}{64 thr} & {80 thr} \\ \hline
			log2\_10xd                      & \multicolumn{1}{r|}{585.44}   & \multicolumn{1}{r|}{658.97}   & \multicolumn{1}{r|}{22.96}  & \multicolumn{1}{r|}{19.24} & 17.41    \\
			square\_10xd                    & \multicolumn{1}{r|}{347.32}   & \multicolumn{1}{r|}{399.65}   & \multicolumn{1}{r|}{15.68}  & \multicolumn{1}{r|}{10.45} & 10.42    \\
			sin\_10xd                       & \multicolumn{1}{r|}{67.66}    & \multicolumn{1}{r|}{77.64}    & \multicolumn{1}{r|}{3.54}   & \multicolumn{1}{r|}{2.57}  & 2.03     \\
			div\_10xd                       & \multicolumn{1}{r|}{697.22}   & \multicolumn{1}{r|}{703.04}   & \multicolumn{1}{r|}{33.00}  & \multicolumn{1}{r|}{28.56} & 25.85    \\
			hyp\_8xd                        & \multicolumn{1}{r|}{1,435.47} & \multicolumn{1}{r|}{1,529.36} & \multicolumn{1}{r|}{54.33}  & \multicolumn{1}{r|}{45.91} & 41.73    \\
			sqrt\_10xd                      & \multicolumn{1}{r|}{99.68}    & \multicolumn{1}{r|}{89.39}    & \multicolumn{1}{r|}{5.55}   & \multicolumn{1}{r|}{4.63}  & 4.19     \\
			multiplier\_10xd                & \multicolumn{1}{r|}{565.71}   & \multicolumn{1}{r|}{639.51}   & \multicolumn{1}{r|}{21.28}  & \multicolumn{1}{r|}{17.78} & 15.08    \\ \hline
			arbiter\_10xd                   & \multicolumn{1}{r|}{92.86}    & \multicolumn{1}{r|}{153.78}   & \multicolumn{1}{r|}{6.03}   & \multicolumn{1}{r|}{4.39}   & 3.71     \\
			voter\_10xd                     & \multicolumn{1}{r|}{155.86}   & \multicolumn{1}{r|}{126.18}   & \multicolumn{1}{r|}{6.36}   & \multicolumn{1}{r|}{4.57}  & 4.32     \\
			mem\_ctrl\_10xd                 & \multicolumn{1}{r|}{298.33}   & \multicolumn{1}{r|}{352.49}   & \multicolumn{1}{r|}{13.65}  & \multicolumn{1}{r|}{10.59} & 8.35     \\ \hline
   \textbf{Average Speedup}                 & \multicolumn{1}{r|}{1}        & \multicolumn{1}{r|}{0.92}     & \multicolumn{1}{r|}{22.05}  & \multicolumn{1}{r|}{\textbf{28.24}}  & 31.98    \\ \hline
		\end{tabular}
\end{table}

\stitle{Parameters.}
Due to the absence of similar comparable works, we only compared the speedup and QoR of our proposed \texttt{parallel refactor} with the sequential refactor in ABC. For each case, we performed 5 runs at various thread counts ranging from 1 to 80 and averaged the results for presentation. This approach was adopted to mitigate the impact of external noise.

\subsection{Parallel AIG Refactor Scalability.}
To evaluate the scalability of \textit{parallel refactor}, we present the speedup performance of our algorithm using different numbers of threads in Fig. \ref{fig: result}. We varied the number of threads from 1 to 80, with increments of 8 after 4 threads. We also present the detailed running time of four evenly spaced thread numbers (1, 40, 64 and 80) in Table \ref{table: runtimes}. The average speedup is computed by the average speedup of each case under the same number of threads.

As the number of threads increases, the speedup ratio and the ratio of threads decrease. To analyze the experimental results, we use a performance analysis tool. We find that the bottleneck of parallel acceleration of our algorithm lies in the sequential execution of the \textit{post-processing} stage.

\subsection{Parallel AIG Refactor QoR}

\begin{table}[t]
	\centering
	\caption{QoR of parallel refactor.}
    \renewcommand\tabcolsep{9.4pt} 
	\label{table: QoR}
		\begin{tabular}{|c|rr|rr|}
			\hline
			  \multirow{2}{*}{Benchmark name}  & \multicolumn{2}{c|}{ABC \textit{refactor -l 0}} & \multicolumn{2}{c|}{80 threads/ABC}                 \\ \cline{2-5}
			                                  & \multicolumn{1}{r|}{Area}                       & {Depth}                             & \multicolumn{1}{r|}{Area} & {Depth} \\ \hline
			log2\_10xd                      & \multicolumn{1}{r|}{32,273,408}                 & \multicolumn{1}{r|}{445}            & \multicolumn{1}{r|}{1.00} & 1.00    \\
			square\_10xd                    & \multicolumn{1}{r|}{18,741,248}                 & \multicolumn{1}{r|}{250}            & \multicolumn{1}{r|}{1.00} & 1.00    \\
			sin\_10xd                       & \multicolumn{1}{r|}{5,448,704}                  & \multicolumn{1}{r|}{224}            & \multicolumn{1}{r|}{1.00} & 1.00    \\
			div\_10xd                       & \multicolumn{1}{r|}{58,106,880}                 & \multicolumn{1}{r|}{4,372}          & \multicolumn{1}{r|}{\textbf{0.87}} & 1.00    \\
			hyp\_8xd                        & \multicolumn{1}{r|}{54,359,296}                 & \multicolumn{1}{r|}{24,801}         & \multicolumn{1}{r|}{\textbf{0.99}} & 1.00    \\
			sqrt\_10xd                      & \multicolumn{1}{r|}{23,358,464}                 & \multicolumn{1}{r|}{5,932}          & \multicolumn{1}{r|}{1.00} & 1.00    \\
			multiplier\_10xd                & \multicolumn{1}{r|}{27,457,536}                 & \multicolumn{1}{r|}{274}            & \multicolumn{1}{r|}{1.00} & 1.00    \\ \hline
			arbiter\_10xd                   & \multicolumn{1}{r|}{12,123,136}                 & \multicolumn{1}{r|}{87}             & \multicolumn{1}{r|}{1.00} & 1.00    \\
			voter\_10xd                     & \multicolumn{1}{r|}{12,949,504}                 & \multicolumn{1}{r|}{65}             & \multicolumn{1}{r|}{1.00} & \textbf{1.09}    \\
			mem\_ctrl\_10xd                 & \multicolumn{1}{r|}{47,691,776}                 & \multicolumn{1}{r|}{114}            & \multicolumn{1}{r|}{1.00} & 1.00    \\ \hline
		\end{tabular}
\end{table}

We compare our parallel algorithm with ABC using the results of our algorithm with 80 threads, which is the highest number of threads in our experiments. Table \ref{table: QoR} shows the detailed data of area and depth after ABC command \texttt{refactor -l 0} optimization. We divide the optimization results of our algorithm by those of ABC and keep only two decimal places to show the difference in QoR between the two methods.

As shown in table \ref{table: QoR}, compared to the results in ABC, most examples have nearly identical area and depth. In addition, it is important to note that the variation rate of the QoR results under all the same conditions does not exceed one hundred thousandths. 

\section{conclusion}
\label{sec:conclusion}
In this work, we propose a parallel refactor algorithm for AIG. We identify and address the challenges in parallel scenarios. We adopt a scheduling strategy that alternates between parallel and serial execution to exploit the parallelism among nodes while preserving topological order and optimization quality. For the work scheduler, evaluation, and replacement stages, we apply targeted strategies to exploit the parallelism across node levels while using lightweight locks for accessing shared data. Thus, our algorithm achieves significant speedup while maintaining optimization quality comparable to that of ABC.

\section*{ACKNOWLEDGEMENT}
This work is supported in part by the Major Key Project of PCL (No. PCL2023A03),
the SZU graduate Program SZUGS2023JG06.

\small
\bibliographystyle{IEEEtran} 
\bibliography{formal.bib}

\end{CJK*}
\end{document}